\newtheorem{theorem}{Theorem}
\newtheorem{lemma}{Lemma}
\theoremstyle{definition}
\def\beq{ \begin{equation} }
\def\eeq{ \end{equation} }
\def\square{\vcenter{\vbox{\hrule height .4pt
  \hbox{\vrule width .4pt height 5pt \kern 5pt
        \vrule width .4pt} \hrule height .4pt}}}
\begin{document}


\title{Competitive exclusion in a model with seasonality: three species cannot coexist in an ecosystem with two seasons}
\author{Hwai-Ray Tung and Rick Durrett \\Dept. of Mathematics, Duke University }

\date{\today}						

\maketitle

\begin{abstract} 
Chan, Durrett, and Lanchier introduced  a multitype contact process with temporal heterogeneity involving two species competing for space on the d-dimensional integer lattice. Time is divided into two seasons. They proved that there is an open set of the parameters for which both species can coexist when their dispersal range is sufficiently large. Numerical simulations suggested that three species can coexist in the presence of two seasons. The main point of this paper is to prove that this conjecture is incorrect. To do this we prove results for a more general ODE model and contrast its behavior with other related systems that have been studied in order to understand the competitive exclusion principle.
\end{abstract}

\noindent Keywords: Competitive Exclusion, Resource Competition, Periodic Environment, Dynamical Systems

\section{Introduction}
Understanding the conditions that allow for multiple species to coexist has been of longstanding interest. The competitive exclusion principle, sometimes called Gause's principle, states that $n$ resources can support at most $n$ species. For example, in Gause (1932)'s experiments with Paramecium, there was one resource, food, and the species that better utilized the food Gause gave them drove the others to extinction. However, in other situations, what constitutes a resource is not always clear. Hutchinson (1961) drew attention to this through the "Paradox of the Plankton," the enormous diversity of phytoplankton coexisting despite the small number of resources in ocean water. Many explanations for the seeming failure of the competitive exclusion principle have been explored in math models; see Armstrong and McGehee (1980) for ODE models and Hening and Nguyen (2020) for SDE and piecewise deterministic Markov process models. Hutchinson's explanation was a changing environment; times when different species are favored would be considered different niches.  

To demonstrate how temporal heterogeneity could encourage coexistence, Armstrong and McGehee (1976) considered a simple $n$ season system where $n$ species could survive on one resource. We define a season as an interval of time under which the parameters do not exhibit explicit time dependence. The system is

\begin{equation*}
    \frac{1}{n_i}\frac{dn_i}{dt} = \gamma_iRg_i(t) - \sigma_i, \qquad R = R_{max} - \sum_{i=1}^k s_in_i
\end{equation*}

\noindent where $R$ represents available resource, $k$ is the number of species, and $g_i(t)$ is a function of period $T$ that is equal to $1$ on the interval $[a_i, b_i]$ and is equal to $0$ on the intervals $[0, a_i]$ and $[b_i, T]$. When $g_i(t) = 1$ and there is no temporal heterogeneity, we recover Volterra (1928)'s model, one of the earliest models for justifying the competitive exclusion principle; the species with the highest $R_{max} - \sigma_i/\gamma_i$ wins. When there is temporal heterogeneity, coexistence becomes possible. Intuitively, $g_i(t)$ indicates whether species $i$ is in a growing season or declining season. By having disjoint growing seasons, one species would quickly grow while the others would quickly shrink, preventing them from effectively competing with the currently growing species. Armstrong and McGehee then constructively proved that in their model, parameters could be found that allowed $n$ species to coexist given $n$ seasons. Coexistence here is an example of the storage effect proposed in Chesson (1994), which outlines how species-specific responses to the environment, covariance between environment and competition, and buffered population growth can contribute to coexistence. The name of the storage effect comes from how ``storing" more benefits of advantageous times than is ``spent" during disadvantageous times can enable coexistence.

Chan, Durrett, and Lanchier (2009) considered a two-type contact process on a square lattice with long range interaction and showed that for an open set of parameters, two species can coexist in a model with two seasons. Their system is a stochastic spatial analog of
\begin{equation}
    \frac{1}{n_i}\frac{dn_i}{dt} = \gamma_i(t) R - \sigma_i \qquad R = 1 - \sum_{i=1}^k n_i
        \label{eq:cdl_mean_field}
\end{equation}
where the $\gamma_i$ are periodic functions, $R$ represents available space, and $k$ is the number of species. There is one resource $R$ so in the temporally homogeneous case one species will competitively exclude the others. In the case that the $\gamma_i(t)$ are constant on $[0,T_1]$, on $[T_1,T_2]$, and periodic, there are two seasons and therefore two niches; so, it is not surprising that two species can coexist. They speculated that a fast dispersing species could exploit the early part of a season before losing to a superior competitor, allowing for three or more species to coexist. Here, we will prove that this is not possible in the ODE.

The two-species system in CDL is a special case of the two-species periodic Lotka-Volterra model whose population sizes $n_1$ and $n_2$ are described by

\begin{align*}
    \frac{1}{n_1}\frac{dn_1}{dt} &= b_1(t)-a_{11}(t)n_1 - a_{12}(t)n_2 \\
    \frac{1}{n_2}\frac{dn_2}{dt} &= b_2(t)-a_{21}(t)n_1 - a_{22}(t)n_2
\end{align*}

\noindent 
where $b_i(t)$ and $a_{ij}(t)$ are periodic functions with period $T$. In the case that $a_{2i} = k a_{1i}$, the Lotka-Volterra model can be written as a periodic version of Volterra (1928)'s model. Cushing (1980) studied the stability of periodic solutions by generalizing the bifurcation diagrams for the constant coefficient Lotka-Volterra model, and gave an example of when there is coexistence in the periodic Lotka-Volterra model, but one of the species goes extinct when temporal variation is removed by replacing the periodic parameters with their average. Mottoni and Schiaffino (1981) study the same model using a geometric approach and, in addition to recovering some of Cushing's results, also prove that any solution approaches a solution with period $T$.

In this paper, we consider a system that we call the three-species periodic Volterra model

\begin{equation}
    \frac{1}{n_i} \frac{dn_i}{dt} = \gamma_i(t) R(n_1, n_2, n_3, t) - \sigma_i(t), \qquad i=1, 2, 3
    \label{eq:main_model}
\end{equation}

\noindent where $n_i$ is the population size of species $i$, $\gamma_i$ is the growth rate gained per available resource amount for species $i$, $R$ is the amount of available resource, and $\sigma_i$ is the death rate of species $i$.  $R, \gamma_i, $ and $\sigma_i$ are all periodic in $t$ with period $T$. To prove results about this system, we suppose that

\begin{itemize}
    \item[\textbf{A1}] $R$ is strictly decreasing with respect to population sizes $n_1, n_2,$ and $n_3$.
    \item[\textbf{A2}] $R \leq 0$ when the population size of any one species is sufficiently large.
    \item[\textbf{A3}] $\gamma_i(t)$ and $\sigma_i(t)$ are positive and upper bounded.
    \item[\textbf{A4}] $R$ is continuous with respect to $n_1, n_2,$ and $n_3$.
    \item[\textbf{A5}] We have existence and uniqueness of solutions.
    \label{model_assumptions}
\end{itemize}

\noindent
$A1-3$ are reasonable biologically. $A1$ states that a larger population means more resource consumption, and therefore less available resource. $A2$ implies that there is a limited amount of resources that cannot support infinitely large populations. $A3$ ensures that our birth and death functions have the proper sign and do not blow up. $A4-A5$ are reasonable mathematically. The system \eqref{eq:cdl_mean_field} with three species satisfies these conditions.

We also will not consider the case that there is a nontrivial triple $c_i$ such that 

$$c_1 \gamma_1 + c_2 \gamma_2 + c_3\gamma_3 = \int_0^T c_1 \sigma_1 + c_2 \sigma_2 + c_3\sigma_3 dt = 0$$

\noindent This case is also ignored when examining the competitive exclusion principle for the Volterra model with multiple resources - see page 47 of Hofbauer and Sigmund 1998. The reason is that this case represents a degenerate case where one of the species populations can be written as a function that is increasing with respect to the other two and is periodic in $t$ with period $T$. This means the system can be reduced to a two-species model. While coexistence is possible with two seasons under this case, its equilibrium lacks stability and it loses its coexistence with the slightest perturbations in $\gamma_i$ or $\sigma_i$.  

There are many different definitions of coexistence. For this paper, we say that the system exhibits coexistence if none of the species go extinct for any positive initial condition. A species goes extinct if $\lim_{t\rightarrow \infty} n_i(t) = 0$.

We show the following theorems.

\begin{theorem}
If the growth per resource rates $\gamma_i$ are linearly dependent, then the three-species periodic Volterra model does not exhibit coexistence.
\label{coro:LD}
\end{theorem}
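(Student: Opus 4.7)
\medskip
\noindent\textbf{Proof plan.} The plan is to exploit the linear relation among the $\gamma_i(t)$ to build a Lyapunov-like functional whose evolution is independent of the shared resource $R$.

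By linear dependence, pick constants $c_1, c_2, c_3$, not all zero, with $c_1 \gamma_1(t) + c_2 \gamma_2(t) + c_3 \gamma_3(t) = 0$ for all $t$. Define $V(t) := \sum_i c_i \log n_i(t)$. Using \eqref{eq:main_model},
\[
\dot V(t) = \sum_i c_i\bigl(\gamma_i(t) R - \sigma_i(t)\bigr) = -\sum_i c_i \sigma_i(t),
\]
so the coupling through $R$ cancels. Integrating over one period gives $V(t+T) - V(t) = -S$ where $S := \int_0^T \sum_i c_i \sigma_i(t)\, dt$. The excluded degenerate case forces $S \neq 0$; replacing $c_i \to -c_i$ if necessary, we may assume $S > 0$, so $V(kT) \to -\infty$.

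Next, A1--A2 imply that trajectories stay bounded above: if some $n_i$ exceeds the threshold at which $R \leq 0$, then $\dot n_i \leq -\sigma_i n_i < 0$, forcing a decrease. Hence $\log n_i(t) \leq \log M$ for some $M$ depending on the initial data. Combined with $V(kT) \to -\infty$ and the fact that terms $c_j \log n_j$ with $c_j \leq 0$ stay bounded below, the partial sum $\sum_{c_i > 0} c_i \log n_i(kT)$ tends to $-\infty$. Thus at least one species with positive $c_i$ satisfies $\liminf_{k \to \infty} n_i(kT) = 0$. Finally, A3 and the upper bound give $\lvert \log n_i(t) - \log n_i(kT) \rvert \leq CT$ for $t$ in any single period, so convergence along the subsequence $kT$ extends to the full limit $\lim_{t \to \infty} n_i(t) = 0$.

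The main obstacle is promoting $\liminf_k n_i(kT) = 0$ to $\lim_k n_i(kT) = 0$ for a specific species. When exactly one of the $c_i$ is positive, the other two terms of $V$ stay bounded below, so the sole positive-weight term carries the entire drift and $n_i(kT) \to 0$ follows directly. When two of the $c_i$ share a sign, one must rule out oscillatory trajectories in which two species trade places being small. This is handled by passing to the $\omega$-limit set of the Poincar\'e map $n \mapsto n(T)$: by compactness together with the bounds above, this set lies in the invariant boundary $\{n_1 n_2 n_3 = 0\}$, and the reduced two-species dynamics on the relevant face then single out the species that actually becomes extinct.
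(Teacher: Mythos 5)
Your opening computation is exactly the paper's proof of this theorem: choose $c_i$ with $\pmb{c\cdot\gamma}=0$, observe that $\frac{d}{dt}\ln\bigl(n_1^{c_1}n_2^{c_2}n_3^{c_3}\bigr)=-\pmb{c\cdot\sigma}$, invoke the excluded degeneracy to get a strictly negative drift per period after flipping signs, and conclude $n_1^{c_1}n_2^{c_2}n_3^{c_3}\to 0$ with bounded populations. The paper then stops and cites its Lemma \ref{thm:main}, whose content is precisely the step you flag as "the main obstacle," and this is where your proposal has a genuine gap. In the case $c_1,c_2>0$, $c_3\le 0$ (note also that since $\gamma_i>0$, a relation $\pmb{c\cdot\gamma}=0$ forces mixed signs, so this case cannot be avoided), boundedness of $n_3$ gives $n_1^{c_1}n_2^{c_2}\to 0$, hence the $\omega$-limit set lies in $\{n_1=0\}\cup\{n_2=0\}$. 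But that alone does not "single out the species that actually becomes extinct": the two faces meet along the edge $\{n_1=n_2=0\}$, so an invariant (even connected) limit set can span both faces through that edge, corresponding to trajectories in which $n_1$ and $n_2$ alternately collapse and recover, keeping both $\limsup$'s positive. Since the paper's notion of extinction is $\lim_{t\to\infty}n_i(t)=0$, ruling out exactly this trading-places behavior is the whole difficulty, and an appeal to "the reduced two-species dynamics on the relevant face" does not address it.

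The paper closes this gap with a quantitative argument near the edge (its Lemmas \ref{lem:R1}--\ref{lem:R3} and \ref{lem:1converge}): any transit between the two faces must pass close to the edge and, because the per-capita rates are bounded, must spend an amount of time there that grows as $n_1^{c_1}n_2^{c_2}\to 0$; during such a passage $n_3$ converges to its unique nontrivial periodic orbit $n_3^*$ of the single-species dynamics; consequently the sign of the per-period growth $\int_0^T\gamma_1 R(0,0,n_3^*,t)-\sigma_1\,dt$ eventually dictates a consistent direction in which trajectories leave a neighborhood of the edge, so transits can occur in at most one direction and $\limsup n_1=0$ or $\limsup n_2=0$. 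Your sketch contains none of this machinery (nor an equivalent substitute, e.g.\ an analysis of the Poincar\'e map restricted to a neighborhood of the edge), so as written the proof is incomplete precisely at the step that distinguishes non-coexistence from the weaker statement that $\liminf$ of some species is zero.
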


The linear dependence assumption holds in the piecewise constant three-species model of Chan, Durrett, and Lanchier and implies that the system does not exhibit coexistence. Miller and Klausmeier (2017) also come to the same conclusion, although their arguments are not rigorous. 

Exact linear dependence is a strong condition, but our result is robust to slight deviations; we extend Theorem \ref{coro:LD} to the situation in which the $\gamma_i$ are nearly linearly dependent

\begin{theorem}\label{coro:almostLD}
Given $c_i$ not all $0$, $\sigma_i$, and $R$ for the three-species periodic Volterra model, there exists an $\epsilon>0$ such that if
\begin{equation*}
\int_0^T |c_1\gamma_1 + c_2\gamma_2 + c_3\gamma_3| dt < \epsilon
\end{equation*}
\noindent Then the model does not exhibit coexistence.
\end{theorem}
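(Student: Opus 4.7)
My plan is to reduce Theorem~\ref{coro:almostLD} to Theorem~\ref{coro:LD} by a direct perturbation argument built around the same Lyapunov-type quantity. Define
\begin{equation*}
    W(t) \;:=\; c_1\log n_1(t) + c_2\log n_2(t) + c_3\log n_3(t),
\end{equation*}
whose derivative along any trajectory of \eqref{eq:main_model} is
\begin{equation*}
    \dot W(t) \;=\; R(t)\sum_{i=1}^3 c_i\gamma_i(t)\;-\;\sum_{i=1}^3 c_i\sigma_i(t).
\end{equation*}
In the exactly linearly dependent case the first term vanishes, and Theorem~\ref{coro:LD} extracts extinction from the fact that $D := \int_0^T \sum_i c_i\sigma_i(t)\,dt$ is nonzero (which is guaranteed by the nondegeneracy convention once $\sum_i c_i\gamma_i\equiv 0$). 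The goal is to show that the same per-period drift continues to control $W$ when the first term is merely small in $L^1$.

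First I would use A1--A2 to confine each trajectory to a compact forward-invariant set: whenever some $n_j$ exceeds the A2 threshold, $R\le 0$ forces $\dot n_j<0$, so after a bounded transient $|R(t)|$ is uniformly bounded by some constant $R_{\max}$ (by A3 and continuity A4). If $D=0$ then the nondegeneracy convention forces $\sum_i c_i\gamma_i\not\equiv 0$, so $K:=\int_0^T|\sum_i c_i\gamma_i|\,dt>0$ and any $\epsilon<K$ makes the theorem's hypothesis vacuous; hence it suffices to handle $D\neq 0$, and by replacing $(c_i)$ with $(-c_i)$ if necessary we may assume $D>0$. For $k$ large enough that the trajectory has entered the invariant set, integrating the identity above over $[kT,(k+1)T]$ and using $T$-periodicity of $\gamma_i$ gives
\begin{equation*}
    W\bigl((k+1)T\bigr)-W(kT) \;\le\; R_{\max}\!\int_0^T\Bigl|\sum_{i=1}^3 c_i\gamma_i(t)\Bigr|\,dt \;-\; D \;<\; R_{\max}\epsilon - D.
\end{equation*}
Choosing $\epsilon < D/(2R_{\max})$ forces $W$ to drop by at least $D/2$ per period, and since $\dot W$ is pointwise bounded (by A3 and $|R|\le R_{\max}$) the within-period oscillation of $W$ is uniformly bounded, so $W(t)\to -\infty$ as $t\to\infty$.

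The remaining step is to deduce extinction from $W\to-\infty$, and my plan is to appeal directly to the argument of Theorem~\ref{coro:LD}: that theorem already converts a drift of $W$ away from zero into the failure of coexistence, and the perturbation above produces exactly such a drift, merely with a smaller constant. I expect this final transfer to be the main technical point if carried out from scratch, since the uniform upper bound on the $n_i$ only immediately yields $\liminf n_i(t)=0$ for some species whose $c_i$ has the right sign, whereas the paper's definition of extinction demands the full limit $\lim n_i(t)=0$. Upgrading $\liminf$ to $\lim$ presumably uses the periodic structure of \eqref{eq:main_model} together with the uniform bound on $\dot n_i/n_i$ to rule out recoveries; but since Theorem~\ref{coro:LD} has already invested in exactly that step, I treat Theorem~\ref{coro:almostLD} as a soft perturbation inheriting the extinction conclusion unchanged.
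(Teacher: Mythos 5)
Your main estimate is essentially the paper's own argument for the generic case: integrate \eqref{eq:c1c2c3} over one period, bound the resource term by $\max|R|\int_0^T|\pmb{c\cdot\gamma}|\,dt$ after confining the trajectory to a compact set, choose $\epsilon$ so that the drift $-\int_0^T\pmb{c\cdot\sigma}\,dt$ dominates, and then convert $n_1^{c_1}n_2^{c_2}n_3^{c_3}\to 0$ into non-coexistence by the machinery already built for Theorem~\ref{coro:LD}, which in the paper is Lemma~\ref{thm:main}. The genuine gap is that this conversion is only valid when the exponents are \emph{not all positive}, and you never check the sign pattern of $c$ after your flip. In Theorem~\ref{coro:LD} this is automatic: a nontrivial triple with $\pmb{c\cdot\gamma}\equiv 0$ and $\gamma_i>0$ must have mixed signs. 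In Theorem~\ref{coro:almostLD} the triple is arbitrary, and when all $c_i$ have the same sign your normalization $D>0$ leaves you with all-positive exponents; then $n_1^{c_1}n_2^{c_2}n_3^{c_3}\to 0$ does \emph{not} by itself rule out coexistence under the paper's definition, since all three species could take turns being small (May--Leonard-type cycling gives every $n_i$ vanishing $\liminf$ but positive $\limsup$), which is exactly the scenario the ``not all positive'' hypothesis of Lemma~\ref{thm:main} exists to exclude. The paper closes this case by a separate direct argument: if the $c_i$ are all positive, then $\int_0^T|\pmb{c\cdot\gamma}|\,dt<\epsilon$ already forces $\int_0^T c_1\gamma_1\,dt$ to be small, and integrating the equation for $n_1$ alone gives $\ln[n_1(t+T)^{c_1}/n_1(t)^{c_1}]\le \max|R|\int_0^T \pmb{c\cdot\gamma}\,dt-\int_0^T c_1\sigma_1\,dt<0$ once $\epsilon<\frac{1}{\max|R|}\int_0^T c_1\sigma_1\,dt$, so species $1$ itself goes extinct. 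Your ``soft perturbation inheriting the extinction conclusion unchanged'' does not supply this case, and it cannot, because the inherited lemma's hypothesis fails there.

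A secondary problem is your dismissal of the subcase $D=\int_0^T\pmb{c\cdot\sigma}\,dt=0$: you propose taking $\epsilon<K$ with $K=\int_0^T|\pmb{c\cdot\gamma}|\,dt$, but $\epsilon$ must be chosen from $c$, $\sigma$, and $R$ alone --- the $\gamma_i$ are not among the given data (if they were, the whole theorem would be vacuous, since one could always pick $\epsilon$ below the fixed value of $\int_0^T|\pmb{c\cdot\gamma}|\,dt$), so this choice is circular. To be fair, the paper's proof also tacitly assumes the flipped $\int_0^T\pmb{c\cdot\sigma}\,dt$ is strictly positive in the mixed-sign case, so this edge case is glossed over on both sides; but your explicit patch does not repair it. The remaining ingredients of your write-up --- compact confinement via A1--A2, uniform bound on $|R|$, bounded within-period oscillation of $W$ so that decrease along multiples of $T$ gives $W(t)\to-\infty$ --- all match the paper and are fine.
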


\noindent Section \ref{sec:3species} gives an important lemma used to prove the two theorems. Section \ref{sec:apps} proves the theorems and gives an example application.

\section{Condition for Coexistence and Extinction}
\label{sec:3species}
To determine if a species goes extinct, i.e., $\lim_{t\rightarrow \infty}n_i(t) = 0$, we first focus on $n_1^{c_1}n_2^{c_2}n_3^{c_3}$. This function has been used to prove results on coexistence in other models (Volterra 1928, Hofbauer 1981, Hofbauer and Sigmund 1998, Schreiber et al. 2011) and acts as an ``average Lyapunov" function whose decrease implies average movement towards faces with $c_i > 0$ and away from faces with $c_i < 0$.

Multiplying through \eqref{eq:main_model} by $c_i$, summing, and setting $\pmb{c\cdot \gamma} = c_1\gamma_1 + c_2\gamma_2 + c_3\gamma_3$ and $\pmb{c\cdot \sigma} = c_1\sigma_1 + c_2\sigma_2 + c_3\sigma_3$, we get

\begin{equation}
    \frac{1}{n_1^{c_1}n_2^{c_2}n_3^{c_3}} \frac{dn_1^{c_1}n_2^{c_2}n_3^{c_3}}{dt} = (\pmb{c\cdot \gamma}) R(n_1, n_2, n_3, t) - (\pmb{c\cdot \sigma} )
    \label{eq:c1c2c3}
\end{equation}

\noindent For some systems, an appropriate choice of $c_1, c_2,$ and $c_3$ will let us ignore $R$ and show that $n_1^{c_1}n_2^{c_2}n_3^{c_3} \rightarrow 0$. Once this is established, we can use the following lemma to preclude coexistence.

\begin{lemma}
The three-species periodic Volterra model \eqref{eq:main_model} does not exhibit coexistence iff there exist constants $c_1, c_2,$ and $c_3$ that are not all positive and
$$\lim_{t\rightarrow \infty}n_1^{c_1}n_2^{c_2}n_3^{c_3} = 0.$$
\label{thm:main}
\end{lemma}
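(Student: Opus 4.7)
The forward direction is essentially immediate. If the model fails to coexist, there is some positive initial condition and an index $i$ such that $n_i(t) \to 0$; taking $c_i = 1$ and $c_j = 0$ for $j \neq i$ gives a triple that is not all positive (two coordinates are zero), and the corresponding product is exactly $n_i(t) \to 0$.

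For the reverse direction, my plan is to combine the uniform boundedness of the $n_i$ with a sign-based case analysis on the $c_i$. First I would establish the boundedness: by A2 and A3, whenever some $n_i$ becomes large enough that $R \leq 0$, we have $\dot n_i/n_i = \gamma_i R - \sigma_i \leq -\sigma_i < 0$, so every orbit is eventually trapped in some box $[0,M]^3$. Consequently, for any $c_i \leq 0$, the factor $n_i^{c_i}$ is bounded below by $M^{c_i} > 0$.

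I would then argue by cases on how many $c_i$ are strictly positive. If all $c_i$ are non-positive, then $V$ is bounded below by $M^{c_1+c_2+c_3} > 0$, contradicting $V \to 0$, so at least one $c_i$ must be strictly positive. If exactly one $c_i$ (say $c_1$) is strictly positive and the other two are non-positive, the latter two factors are bounded below, so $V \to 0$ forces $n_1^{c_1} \to 0$ and hence $n_1 \to 0$, yielding extinction.

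The only delicate case is when two of the $c_i$'s are strictly positive, say $c_2,c_3 > 0$ and $c_1 \leq 0$. The same boundedness argument gives $n_2^{c_2} n_3^{c_3} \to 0$, and the pointwise inequality $n_2^{c_2}n_3^{c_3} \geq \min(n_2,n_3)^{c_2+c_3}$ yields $\min(n_2,n_3) \to 0$; the main obstacle is upgrading this to a limit of zero for a specific species. Here I would use the bounded growth rate $|d\log n_i/dt| \leq B$, which follows from A3 together with continuity of $R$ on the bounded trap: if $n_2(t_k) \geq \eta$ along a subsequence $t_k \to \infty$, then $n_3(t_k) \to 0$, and the Lipschitz estimate on $\log n_3$ keeps $n_3$ small over windows of length tending to infinity. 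Combining this with $T$-periodicity of the parameters and connectedness of the $\omega$-limit set of a bounded continuous flow, one can force the $\omega$-limit set into one of the invariant coordinate faces $\{n_2=0\}$ or $\{n_3=0\}$, which yields $n_2(t)\to 0$ or $n_3(t)\to 0$. Making this topological/dynamical step rigorous, and so ruling out sustained alternating oscillations, is the hardest part of the argument.
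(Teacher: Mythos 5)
Your forward direction and the first two cases of the reverse direction (all $c_i$ nonpositive; exactly one $c_i$ positive) match the paper's argument exactly, using boundedness of the $n_i$ from A2--A3 to bound $n_i^{c_i}$ below when $c_i\le 0$. The problem is the case you yourself flag as the delicate one: two positive exponents. There you only obtain $\min(n_2,n_3)\to 0$ and then appeal to ``$T$-periodicity and connectedness of the $\omega$-limit set'' to force the limit set into one of the faces $\{n_2=0\}$ or $\{n_3=0\}$, admitting that making this rigorous is the hardest part. That step is not just unfinished; as stated it cannot work, because the union $\{n_2=0\}\cup\{n_3=0\}$ is itself connected through the corner $\{n_2=n_3=0\}$. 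A trajectory that alternates -- $n_2$ small while $n_3$ stays of order one, then a passage near the corner, then $n_3$ small while $n_2$ recovers -- has a perfectly connected $\omega$-limit set lying in that union, so connectedness plus boundedness of $d\log n_i/dt$ does not exclude exactly the sustained alternation that is the whole difficulty. Some genuinely dynamical input about what happens near the corner is required.

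The paper supplies that input, and it is the content your sketch is missing. Because of \eqref{eq:case4ineq}, any switch between the two faces must pass close to the corner; Lemma \ref{lem:R1} shows the passage time through such a neighborhood diverges as it shrinks, Lemma \ref{lem:R2} shows that a passage is only possible if the third (free) species is bounded below on entry, Lemma \ref{lem:1converge} shows that species alone has a unique attracting periodic orbit $n^*$, and Lemma \ref{lem:R3} concludes that during a long stay near the corner the free species is close to $n^*$, so the per-period growth factor of each small species acquires the fixed sign of $\int_0^T \gamma_1 R(0,0,n^*,t)-\sigma_1\,dt$. A fixed sign means the trajectory always exits the corner region toward the same face, which is what rules out shuttling between $(n_1^*,0)$ and $(0,n_2^*)$ and forces one $\limsup$ to vanish. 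If you want to complete your proposal, you need an argument of this type (an invasion-rate/average-Lyapunov computation against the single-species periodic orbit), not a purely topological one.
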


The remainder of this section describes the ideas behind the proof of Lemma \ref{thm:main}. We start with the easier direction. If the three-species periodic Volterra model does not exhibit coexistence, then there exists a species, which we label as species 1, whose population approaches $0$. Setting $c_1=1$ and $c_2=c_3=0$ completes this direction.
 
We now proceed with the other direction by considering the possible cases of the signs of $c_i$. Recalling that $n_i(t)$ is upper bounded by A2, if $c_i$ is nonpositive then $n_i^{c_i}$ is bounded from below. This implies that for case 1, where all $c_i$ are nonpositive, then $n_1^{c_1}n_2^{c_2}n_3^{c_3}$ cannot approach $0$ and we can ignore this case. This also implies that for case 2, where only one of the $c_i$ is positive, which we label as species $1$, then $n_1^{c_1}n_2^{c_2}n_3^{c_3} \rightarrow 0$ implies $n_1^{c_1} \rightarrow 0$ as $t \rightarrow \infty$ and therefore that species $1$ goes extinct.

Case 3, where two of the $c_i$ are positive and one is nonpositive, is more involved. Let $c_1, c_2 > 0$ and $c_3 < 0$. Using that $n_i$ is upper bounded once more,

\begin{equation}
    n_1(t)^{c_1}n_2(t)^{c_2} \rightarrow 0
    \label{eq:case4ineq}
\end{equation}

\noindent In order to show that species $1$ or $2$ goes extinct, we need to rule out the possibility that species $1$ and $2$ take turns approaching $0$, keeping $\limsup n_1 = n_1^*$ and $\limsup n_2 = n_2^*$ positive. To do so, we note that by \eqref{eq:case4ineq}, paths from $(n_1^*, 0)$ to $(0, n_2^*)$ must travel near the origin after some time. Then, we show that if the trajectory of $(n_1, n_2)$ nears the origin and eventually leaves, then $(n_1, n_2)$ will consistently leave the origin in the same direction, without loss of generality towards $(n_1^*, 0)$. This implies that $n_2^* = 0$ and therefore species $2$ goes extinct. The details of the proof of case 3, as well as a visual overview of the proof, can be found in Section \ref{sec:proof}.

\section{Applications}
\label{sec:apps}
In this section, we use Lemma \ref{thm:main} to prove Theorems \ref{coro:LD} and \ref{coro:almostLD} and give examples of systems where we can rule out coexistence.

\begin{proof}[\textbf{Proof of Theorem \ref{coro:LD}}]
Since the $\gamma_i$ are linearly dependent, we can find $c_i$ such that $\pmb{c \cdot \gamma} = 0$. Then by \eqref{eq:c1c2c3},

$$\frac{d}{dt}\left[\ln\left(n_1^{c_1}n_2^{c_2}n_3^{c_3}\right)\right] = - \pmb{c\cdot \sigma}$$

\noindent Flipping the signs of $c_i$ if necessary, we can assume without loss of generality that $\pmb{c\cdot \sigma}>0$. This implies $n_1^{c_1}n_2^{c_2}n_3^{c_3} \rightarrow 0$, so applying Lemma \ref{thm:main} completes the proof.
\end{proof}

\noindent \emph{\textbf{Application of Theorem \ref{coro:LD}: Contact process with seasons.}}
One notable example of a model where the $\gamma_i$ are linearly dependent is the mean field limit of the three-species two-seasons model that appears in Chan, Durrett, and Lanchier (2009) - see \eqref{eq:cdl_mean_field}. They showed that, in the absence of species 3, coexistence occurs when 

$$\frac{1}{T}\int_0^T \gamma_1(1-\overline{n}_2) - \sigma_1 dt > 0 \text{ and } \frac{1}{T}\int_0^T \gamma_2(1-\overline{n}_1) - \sigma_2 dt > 0$$

\noindent where $\overline{n}_i$ is the nontrivial periodic solution to $\frac{1}{n_i}\frac{dn_i}{dt} = \gamma_i(t) (1-n_i) - \sigma_i$. The first integral represents the net growth rate of species $1$ when $n_1$ has been small for a long time, giving $n_2$ time to converge to $\overline{n}_2$. If the growth rate is positive, then species 1 won't go extinct. Similarly, the second condition represents that species $2$ has a positive growth rate when $n_2$ is small and $n_1$ is near $\overline{n}_1$. Using the ODE result, they showed that the same conditions guaranteed coexistence for the two-type contact process on the square lattice with long range interactions. 

Chan, Durrett, and Lanchier also conjectured that three species could coexist with two seasons. This could be true in their stochastic model, but it does not hold in the mean field limit. To prove this, since there are only two seasons and $\gamma_i$ is a function of the season, the space of possible $\gamma_i$ has dimension $2$. There are three species, so the $\gamma_i$ are linearly dependent. Therefore, by Theorem \ref{coro:LD} the three species cannot coexist.

\begin{proof}[\textbf{Proof of Theorem \ref{coro:almostLD}}]
In the case where $c_i$ are not all the same sign, note that $|R(n_1, n_2, n_3, t)|$ is bounded since $R$ has monotonicity and $n_i$ is bounded. Integrating \eqref{eq:c1c2c3}, we get

\begin{align}
\begin{split}
    \ln\left[\frac{n_1(t+T)^{c_1}n_2(t+T)^{c_2}n_3(t+T)^{c_3}}{n_1(t)^{c_1}n_2(t)^{c_2}n_3(t)^{c_3}}\right] &= \int_t^{t+T}(\pmb{c \cdot \gamma})R - (\pmb{c \cdot \sigma}) ds \\
    &\leq \int_0^T |\pmb{c \cdot \gamma}|\max|R| - (\pmb{c \cdot \sigma}) ds
    \label{eq:integrated}
\end{split}
\end{align}

\noindent Then, flipping the signs of $c_i$ if necessary, setting 

$$\epsilon < \frac{1}{\max|R(n_1, n_2, n_3, t)|}\int_0^T \pmb{c \cdot \sigma} ds$$

\noindent will force $n_1(t+T)^{c_1}n_2(t+T)^{c_2}n_3(t+T)^{c_3} < n_1(t)^{c_1}n_2(t)^{c_2}n_3(t)^{c_3}$, and therefore $n_1^{c_1}n_2^{c_2}n_3^{c_3} \rightarrow 0$. Applying Lemma \ref{thm:main} completes the proof.

In the case where $c_i$ all have the same sign, we assume without loss of generality that $c_i$ are all positive. Then,

\begin{align}
\begin{split}
    \ln\left[\frac{n_1(t+T)^{c_1}}{n_1(t)^{c_1}}\right] &= \int_t^{t+T}c_1\gamma_1 R - c_1\sigma_1 ds \\
    &\leq \int_0^T (\pmb{c \cdot \gamma})\max|R| - c_1\sigma_1 ds 
\end{split}
\end{align}

\noindent Setting

$$\epsilon < \frac{1}{\max|R(n_1, n_2, n_3, t)|}\int_0^T c_1 \sigma_1 ds$$

\noindent forces $n_1(t) \rightarrow 0$ as $t \rightarrow \infty$, which implies extinction of species $1$ and no coexistence.  

\end{proof}

\newpage

\noindent \emph{\textbf{Application of Theorem \ref{coro:almostLD}: Numerical example.}} To conclude this section, we do a concrete example. Consider the system

\begin{align*}
    \frac{1}{n_1}\frac{dn_1}{dt} &= \gamma(3, 5, t)R(n_1, n_2, n_3) - 1 \\
    \frac{1}{n_2}\frac{dn_2}{dt} &= \gamma(4.5, 3.4, t)R(n_1, n_2, n_3) - 1 \\
    \frac{1}{n_3}\frac{dn_3}{dt} &= \gamma(4.1, 3.78, t)R(n_1, n_2, n_3) - 1 \\
    R &= 1-n_1-n_2-n_3 \\
    \gamma(a, b, t) &=  \begin{cases} 
      a & 0 < t \leq 0.6 \\
      a(b/a)^{(t-0.6)/0.4} & 0.6 < t \leq 1 \\
      b & 1 < t \leq 1.6 \\
      b(a/b)^{(t-1.6)/0.4} & 1.6 < t \leq 2 \\
      \end{cases}
\end{align*}

\noindent The system can be viewed as an extension of the two-season model - see Fig \ref{fig:example}A; instead of making $\gamma_i$ piecewise constant, $\gamma_i$ now has a transition period between the two seasons, making the $\gamma_i$ linearly independent. However, the $\gamma_i$ are close enough to being linearly dependent that we can preclude coexistence.

To show that the  cannot coexist, we mimic the proof of Theorem \ref{coro:almostLD}. We first bound $R$ from above. Note that $\gamma_i \geq 3$. This implies that when $R > 1/3$, then $dn_i/dt > 0$ and therefore $dR/dt < 0$. Thus, after some time, $R \leq 1/3$. Next, we set $c_1 = -1, c_2 = -916/307,$ and $c_3 = 1230/307$; these were chosen to make the $\gamma_i$ linearly dependent during times $[0, 0.6] \cup [1, 1.6]$. Now, we integrate.

$$\int_0^T \pmb{c \cdot \sigma}dt = \frac{14}{307} \approx 0.046 $$
$$\max|R(n_1, n_2, n_3, t)|\int_0^T |\pmb{c \cdot \gamma}| dt \leq 0.041 $$

\noindent Since $0.041 < 0.046$, by \eqref{eq:integrated}, $n_1^{c_1}n_2^{c_2}n_3^{c_3} \rightarrow 0$. Applying Lemma \ref{thm:main} implies that one of the species goes extinct. This can be seen in Figure \ref{fig:example}.

\begin{figure}
    \centering
    \includegraphics[scale = 0.6]{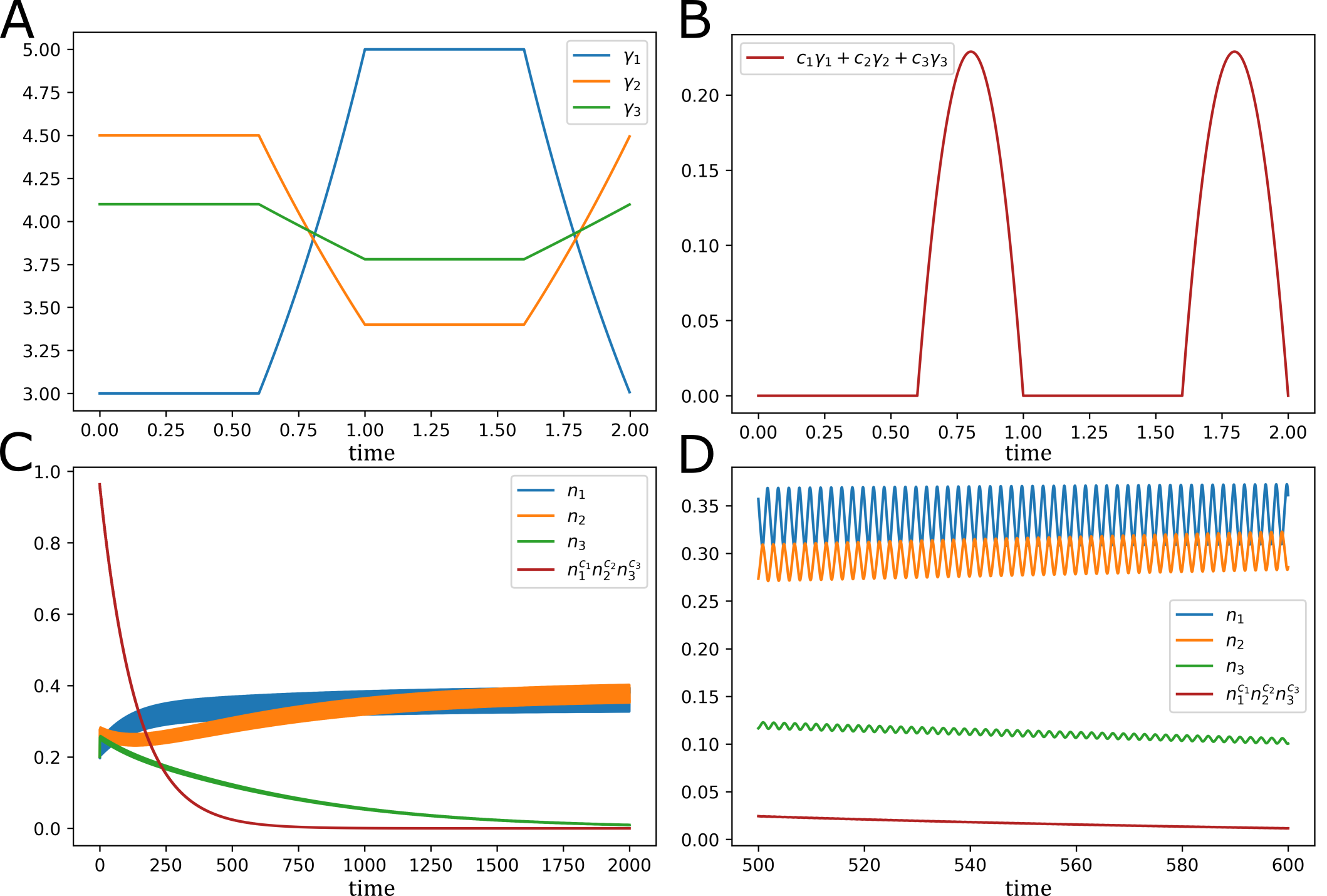}
    \caption{The graphs are based on the example system given in Section \ref{sec:apps}. A) One period of growth functions $\gamma_i$. Instead of the two-season model considered in Theorem \ref{coro:LD}, we add transition seasons to make the $\gamma_i$ continuous. B) Measure of linear dependence $\pmb{c \cdot \gamma}$ for our choice of $c_1, c_2,$ and $c_3$. Since $\pmb{c \cdot \gamma}$ is sufficiently close to $0$, we can preclude coexistence. C) Population dynamics. Species $3$ goes extinct. D) Population dynamics zoomed. The populations oscillate over time due to changes in $\gamma_i$.}
    \label{fig:example}
\end{figure}

\newpage

\section{Proof of Lemma \ref{thm:main}}
\label{sec:proof}

\begin{figure}
    \centering
    \includegraphics[scale = 0.3]{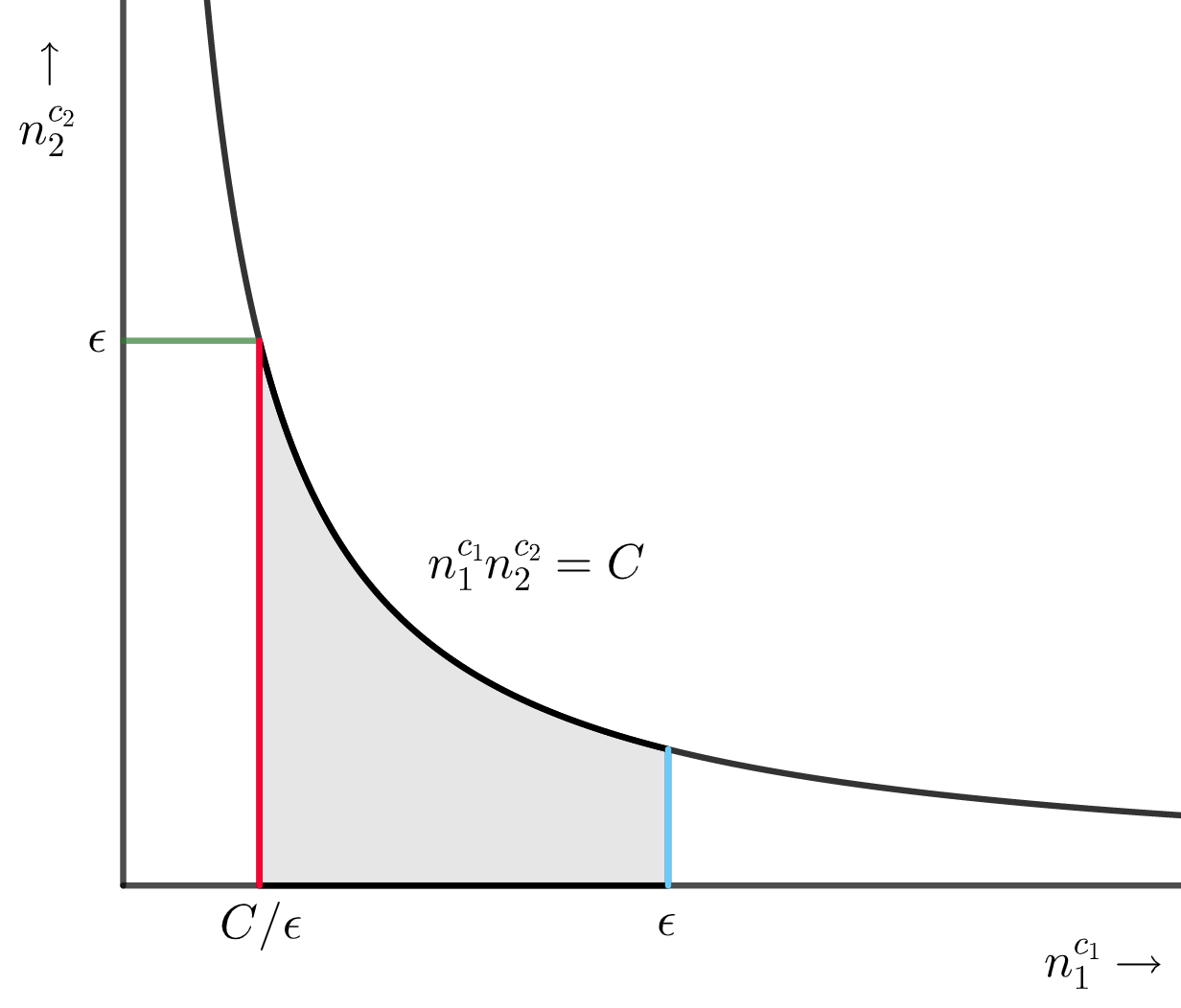}
    \caption{Visual for the proof of case three of Lemma \ref{thm:main}. From \eqref{eq:case4ineq}, after sufficiently large time, the solution must remain below the curve $n_1^{c_1}n_2^{c_2} = C$. To disprove coexistence, we need to show that $(n_1^{c_1}, n_2^{c_2})$ cannot move between $(n_1^*, 0)$ and $(0, n_2^*)$. This is equivalent to proving that the solution cannot move from the blue line to the green line, or vice versa. Lemma \ref{lem:R3} shows that spending sufficient time between the green and blue lines causes $n_1(t+T)/n_1(t)$ to converge to being positive or negative. Lemma \ref{lem:R1} shows that sufficient time for convergence is eventually always achieved.  We now have two cases. If the sign is positive, then the solution will not be able to move from blue to green; $n_1(t+T)/n_1(t)$ will become positive before reaching the red line, preventing the solution from reaching green. Similarly, if the sign is negative, then the solution will not be able to move from green to blue}
    \label{fig:case3}
\end{figure}

Here, we give the details for case 3 in the proof of Lemma \ref{thm:main}. For a visual overview of the proof, see Figure \ref{fig:case3}. We start by proving the following two lemmas.

\begin{lemma}
Let $n_1(t)^{c_1}n_2(t)^{c_2} < C$. The time needed for $n_1(t)^{c_1}$ to pass through the interval $D_1 = [C/\epsilon, \epsilon]$ approaches infinity as $C \rightarrow 0$.
\label{lem:R1}
\end{lemma}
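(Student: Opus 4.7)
The plan is to reduce the lemma to the simple observation that $\ln n_1^{c_1}(t)$ evolves at a bounded rate, while the logarithmic length of the interval $D_1$ diverges as $C\to 0$.

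First, I would pass to the log-derivative. From \eqref{eq:main_model},
\begin{equation*}
\frac{d}{dt}\ln\bigl(n_1^{c_1}(t)\bigr) \;=\; c_1\bigl(\gamma_1(t)\,R(n_1,n_2,n_3,t) - \sigma_1(t)\bigr).
\end{equation*}
By A3, $\gamma_1$ and $\sigma_1$ are uniformly bounded. To bound $R$ uniformly along any trajectory, I would invoke A2 to argue that each $n_i(t)$ stays in a fixed bounded set: once $n_i$ becomes large enough, $R\le 0$ forces $\gamma_i R-\sigma_i<0$ and hence $\dot n_i\le 0$, so $n_i$ cannot escape to infinity. Combined with continuity (A4), $R$ takes values in a compact set along the trajectory. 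Therefore there exists a constant $M>0$, independent of $C$ and $\epsilon$, such that
\begin{equation*}
\left|\frac{d}{dt}\ln\bigl(n_1^{c_1}(t)\bigr)\right| \;\le\; M.
\end{equation*}

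Second, I would use this bound to estimate the transit time. The interval $D_1=[C/\epsilon,\epsilon]$ has logarithmic length $\ln\epsilon-\ln(C/\epsilon)=\ln(\epsilon^2/C)$. If $n_1^{c_1}(t)$ enters $D_1$ at one endpoint at time $t_1$ and exits at the other endpoint at time $t_2$, then integrating the derivative bound and using the triangle inequality gives
\begin{equation*}
\ln(\epsilon^2/C) \;=\; \bigl|\ln n_1^{c_1}(t_2) - \ln n_1^{c_1}(t_1)\bigr| \;\le\; M\,(t_2-t_1),
\end{equation*}
so $t_2-t_1 \ge \ln(\epsilon^2/C)/M$. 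Holding $\epsilon$ fixed and letting $C\to 0$, the right side diverges to $+\infty$, proving the lemma.

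The only step that requires any genuine care is the uniform boundedness of $R$: it is what makes the constant $M$ independent of $C$. Everything after that is a one-line Grönwall-style estimate. I expect no real obstacle, since this is essentially a statement that a quantity with bounded logarithmic derivative cannot traverse an interval of logarithmic length $\sim\ln(1/C)$ in time less than $\sim\ln(1/C)$.
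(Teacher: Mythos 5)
Your proof is correct and follows essentially the same route as the paper: bound the logarithmic derivative of $n_1^{c_1}$ uniformly along trajectories (the paper does this via monotonicity of $R$, sandwiching between $R(0,0,0,t)$ and $R(M,M,M,t)$, where you appeal to boundedness of the orbit and continuity), then note that the logarithmic length $\ln(\epsilon^2/C)$ of $D_1$ diverges as $C\to 0$, forcing the transit time to diverge. The only cosmetic difference is that the paper keeps separate one-sided bounds $p_{\min}$ and $p_{\max}$ for the two directions of traversal, while you use a single two-sided bound $M$.
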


\begin{lemma}
If $(n_1, n_2)$ passes through the region $D = \{(n_1, n_2) | 0 \leq n_1(t)^{c_1}, n_2(t)^{c_2} \leq \epsilon \}$ starting at some sufficiently large time $\tau$, then $n_3(\tau)$ is bounded from below.
\label{lem:R2}
\end{lemma}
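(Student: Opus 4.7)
My plan is to prove Lemma \ref{lem:R2} by contradiction, combining the decay of the Lyapunov quantity $\eta := n_1^{c_1} n_2^{c_2} n_3^{c_3} \to 0$ (with $c_1, c_2 > 0$ and $c_3 < 0$) with the uniform boundedness of the per-capita rates $\bigl|\frac{d}{dt}\ln n_i\bigr| = |\gamma_i R - \sigma_i|$ that follows from A1--A3 and the bound on the populations. The strategy is to show that an entry time $\tau$ at which $n_3(\tau)$ is arbitrarily small would force the trajectory, on a widening backward window about $\tau$, to behave like the one-species periodic logistic equation obtained by setting $n_2 = n_3 = 0$, and that this reduced equation rules out the value $n_1(\tau) = \epsilon^{1/c_1}$.

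Suppose for contradiction that along a sequence of entry times $\tau_k \to \infty$ we have $n_3(\tau_k) \to 0$. After relabeling species $1$ and $2$ if needed, I may assume the trajectory enters $D$ through the boundary $\{n_1^{c_1} = \epsilon\}$, so $n_1(\tau_k)^{c_1} = \epsilon$ and $n_2(\tau_k)^{c_2} \leq \epsilon$. Because $c_3 < 0$ gives $n_3(\tau_k)^{c_3} \to \infty$, the factorization $\eta(\tau_k) = \epsilon \cdot n_2(\tau_k)^{c_2} \cdot n_3(\tau_k)^{c_3}$ combined with $\eta(\tau_k) \to 0$ forces $n_2(\tau_k) \to 0$ with $|\ln n_2(\tau_k)|$ of the same order as $|\ln n_3(\tau_k)|$ up to a constant. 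Letting $K$ be a uniform bound on $\bigl|\frac{d}{dt}\ln n_i\bigr|$ and choosing $s_k \to \infty$ slowly, for example $s_k = \frac{1}{2K}\min(|\ln n_2(\tau_k)|, |\ln n_3(\tau_k)|)$, the bounded-rate estimate keeps $n_2$ and $n_3$ uniformly small on $[\tau_k - s_k, \tau_k]$. On this window the equation for $n_1$ therefore differs from the reduced one-species periodic ODE
\begin{equation*}
\tfrac{1}{n_1}\tfrac{dn_1}{dt} = \gamma_1(t)\, R(n_1, 0, 0, t) - \sigma_1(t)
\end{equation*}
by a perturbation that tends to zero uniformly as $k \to \infty$.

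Either this reduced equation satisfies $\int_0^T \bigl(\gamma_1(t) R(0,0,0,t) - \sigma_1(t)\bigr)\, dt \leq 0$, in which case monotonicity of $R$ already forces $n_1 \to 0$ in the full system so species $1$ is extinct and Lemma \ref{lem:R2} may be read vacuously in the non-coexistence context, or the reduced equation admits a nontrivial $T$-periodic attractor $\bar n_1(t)$ bounded below by some positive constant $b$ depending only on the model parameters. In the second case the widening window $s_k$, together with uniform contraction toward $\bar n_1$, drives $n_1$ arbitrarily close to $\bar n_1$ by time $\tau_k$, contradicting $n_1(\tau_k) = \epsilon^{1/c_1} < b/2$ once $\epsilon$ is fixed small enough at the outset. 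The main obstacle will be quantifying this final step: I need a uniform contraction rate for the reduced ODE that is robust under a small $R$-perturbation, and I need to verify that the perturbation arising from small but nonzero $n_2, n_3$ stays within the regime where a perturbed-attractor argument applies uniformly over the entire window $[\tau_k - s_k, \tau_k]$.
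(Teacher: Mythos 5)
Your setup is fine up to the backward window: on the face $n_1(\tau_k)^{c_1}=\epsilon$, the decay of $n_1^{c_1}n_2^{c_2}n_3^{c_3}$ together with $c_3<0$ does force $n_2(\tau_k)\to 0$ when $n_3(\tau_k)\to 0$, and the bounded per-capita rates do keep $n_2,n_3$ small on a window $[\tau_k-s_k,\tau_k]$ of slowly growing length. The gap is in the final step, and it is not merely a matter of "quantifying" a contraction rate: the reduced periodic one-species equation does \emph{not} contract uniformly toward its positive periodic orbit $\bar n_1$ over initial conditions near $0$. A solution starting at level $\delta$ needs time of order $|\ln\delta|$ (divided by the average low-density growth rate) just to climb to order-one values, and the actual trajectory, continued backward from $n_1(\tau_k)=\epsilon^{1/c_1}$, is only bounded below by $\epsilon^{1/c_1}e^{-Ks_k}$ -- indeed, since the low-density growth rate has positive period-average, the backward continuation typically \emph{is} exponentially small in $s_k$ at the start of the window. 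Such a solution can spend the entire window climbing (with seasonal dips) and arrive exactly at $\epsilon^{1/c_1}$ at time $\tau_k$ without ever approaching $\bar n_1$, so no contradiction with $n_1(\tau_k)<b/2$ can be extracted. Worse, the state you are trying to contradict -- $n_1$ at a fixed small level with $n_2,n_3$ tiny at an arbitrarily late time -- is exactly the recurrent ``species take turns being rare'' scenario that the surrounding case-3 argument must rule out by other means, so an argument that forbade it outright would prove too much.

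The paper's proof goes forward in time instead, and that is what the lemma actually needs. Inside $D$ one has $n_1\le\epsilon^{1/c_1}$, $n_2\le\epsilon^{1/c_2}$, so by monotonicity of $R$,
\begin{equation*}
\ln\left[\frac{n_1(\tau+T)^{c_1}}{n_1(\tau)^{c_1}}\right]\;\ge\; c_1\int_{\tau}^{\tau+T}\gamma_1\,R(\epsilon^{1/c_1},\epsilon^{1/c_2},n_3,t)-\sigma_1\,dt,
\end{equation*}
and for small $\epsilon$ and $n_3(\tau)\le a$ (hence $n_3$ small on $[\tau,\tau+T]$) the right side is positive, because otherwise species $1$ would die out even without competition. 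Thus $n_1(\tau+T)^{c_1}>\epsilon$: the trajectory exits $D$ back through the face it entered within one period. Combined with Lemma \ref{lem:R1}, which says that once $C$ in \eqref{eq:case4ineq} is small a passage through $D$ takes much longer than $T$, this shows a passage is impossible unless $n_3(\tau)>a$, which is the claimed lower bound. Your proposal never uses this one-period forward growth estimate or the passage-time lower bound of Lemma \ref{lem:R1}, and without them the backward perturbed-attractor route cannot be repaired.
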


We define passing through $D_1$ as moving from $n_1(t)^{c_1} = \epsilon$ to $n_1(t)^{c_1}= C/\epsilon$ or vice versa without leaving $D_1$ and passing through $D$ as moving from $n_1(t)^{c_1} = \epsilon$ to $n_2(t)^{c_2}= \epsilon$ or vice versa without leaving $D$. To prove Lemma \ref{lem:R1}, note that

$$c_1\max_t \left[ \gamma_1 R(0, 0, 0, t) - \sigma_1 \right] >\frac{d}{dt}\left[\ln\left(n_1^{c_1}\right)\right] > c_1\min_t \left[  \gamma_1 R(M, M, M, t) - \sigma_1 \right]$$

\noindent Letting the RHS be $p_{min}$ and LHS be $p_{max}$, the amount of time spent traveling from $\epsilon$ to $C/\epsilon$ and the other direction is lower bounded by

$$\frac{1}{p_{min}}\ln{(C/\epsilon^2)} \text{ and } \frac{1}{p_{max}}\ln{(\epsilon^2/C)}$$

\noindent respectively. As $C$ approaches 0, both expressions, and therefore the time for pass through $D_1$, approach infinity.

Now, we prove Lemma \ref{lem:R2}. By Lemma \ref{lem:R1} and \eqref{eq:case4ineq}, if $\tau$ is sufficiently large, $(n_1(\tau), n_2(\tau))$ cannot pass through $D$ by time $\tau + T$. We now aim to show that if we start on $D$ where $n_1(\tau)^{c_1} = \epsilon$, then if $n_3(\tau)$ is small, $n_1(\tau + T)^{c_1} > \epsilon$ and there is no passing through; the other side, where $n_2(\tau)^{c_2} = \epsilon$ can be proven similarly. Note that
\begin{align*}
    \ln\left[\frac{n_1(\tau + T)^{c_1}}{n_1(\tau)^{c_1}}\right] &= c_1\int_{\tau}^{\tau + T} \gamma_1 R(n_1, n_2, n_3, t) - \sigma_1 dt \\ &\geq c_1\int_{\tau}^{\tau + T} \gamma_1 R(\epsilon^{1/c_1}, \epsilon^{1/c_2}, n_3, t) - \sigma_1 dt
\end{align*}

\noindent When $\epsilon$ is sufficiently close to $0$ and $n_3 =0$, the RHS must be positive, else it would imply that species $1$ would go extinct even without competition. By continuity of $R$ (A5), there exists some constant $a > 0$ where the RHS is still positive when $n_3(\tau) \leq a$, and therefore $n_1(\tau + T) > n_1(\tau)$. This would make $n_1$ leave $D$ without passing through. As such, to pass through $D$, there is a lower bound on the population of species $3$.

In order to prove Lemma \ref{lem:R3}, we first need an understanding the dynamics of the system when only one species is present.

\begin{lemma}
When $n_1 = n_2 = 0$, there exists at most 1 nontrivial periodic orbit $n_3^*$ for species 3. If $n_3^*$ exists and we have a nontrivial solution $n_3^{**}$, then $n_3^{**} \rightarrow n_3^*$.
\label{lem:1converge}
\end{lemma}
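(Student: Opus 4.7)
The plan is to exploit the strict monotonicity of $R$ in $n_3$ (assumption A1) via the Poincar\'e time-$T$ map. When $n_1 = n_2 = 0$ the third equation reduces to $\frac{1}{n_3}\frac{dn_3}{dt} = \gamma_3(t) R(0,0,n_3,t) - \sigma_3(t)$. Since $n_3 \equiv 0$ is an invariant solution, any initial condition $n_3(0) > 0$ gives a strictly positive solution for all $t$; and by A2, once $n_3$ is large we have $\gamma_3 R - \sigma_3 < 0$, so positive solutions are bounded and global in $t$. Hence the Poincar\'e map $P : (0,\infty) \to (0,\infty)$ sending $x_0 \mapsto n_3(T)$ is well-defined and, by A4--A5, continuous, with nontrivial periodic orbits corresponding exactly to its positive fixed points.

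The crux is the following computation for two positive solutions $\tilde n_3$, $\hat n_3$:
\[
\frac{d}{dt}\ln\!\left(\frac{\hat n_3(t)}{\tilde n_3(t)}\right) = \gamma_3(t)\bigl[R(0,0,\hat n_3(t),t) - R(0,0,\tilde n_3(t),t)\bigr].
\]
By A1 together with the positivity of $\gamma_3$ from A3, the right-hand side is strictly negative at any $t$ where $\hat n_3(t) > \tilde n_3(t)$. Combined with non-crossing (a consequence of A5), whenever $\hat n_3(0) > \tilde n_3(0)$ the ratio $\hat n_3(t)/\tilde n_3(t)$ is strictly decreasing for all $t \geq 0$.

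Uniqueness is then immediate: two distinct nontrivial periodic orbits would yield a strictly decreasing ratio on $[0,T]$ whose values at $t=0$ and $t=T$ must agree by periodicity, a contradiction. For convergence, suppose $n_3^*$ exists and let $n_3^{**}$ be any other nontrivial solution with (without loss of generality) $n_3^{**}(0) > n_3^*(0)$; the reverse inequality is handled symmetrically. The iterates $x_k := P^k(n_3^{**}(0))$ satisfy $x_k > n_3^*(0)$ by non-crossing and $x_{k+1} < x_k$ by the ratio argument, so $x_k \downarrow x_\infty \geq n_3^*(0)$. Continuity of $P$ forces $x_\infty$ to be a fixed point, and uniqueness gives $x_\infty = n_3^*(0)$. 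Continuous dependence on initial data then upgrades convergence along $\{kT\}$ to convergence for all $t$: $n_3^{**}(kT+s)$ is the time-$s$ flow from initial value $x_k \to n_3^*(0)$, hence tends to $n_3^*(s)$ uniformly on $s \in [0,T]$. The only real subtlety I anticipate is recognizing that the \emph{ratio} (not the difference) is the right monotone quantity; once that observation is in place, A1 supplies the strict inequality, A3 keeps $\gamma_3$ from degenerating, A2 provides boundedness for $P$, and A4--A5 give its continuity, so each hypothesis plays one clean role.
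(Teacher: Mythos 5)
Your proposal is correct and rests on the same key observation as the paper's proof: the log-ratio $\ln(\hat n_3/\tilde n_3)$ of two positive solutions has derivative $\gamma_3\,[R(0,0,\hat n_3,t)-R(0,0,\tilde n_3,t)]$, which is strictly negative when $\hat n_3>\tilde n_3$ by A1 and A3, and integrating over one period kills the possibility of two distinct periodic orbits. Your Poincar\'e-map packaging of the convergence step (monotone bounded iterates $x_k$, continuity of $P$ forcing the limit to be a fixed point, identified by uniqueness) is a slightly more explicit version of the paper's terser conclusion that $n_3^{**}$ improves every cycle ``and approaches $n^*$,'' so it is a welcome refinement rather than a different route.
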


\begin{proof}
For simplicity of notation, we write $R(0, 0, n_3, t)$ as $R(n_3, t)$. Suppose there are two nontrivial solutions $n_3^*$ and $n_3^{**}$, with $n_3^*$ being a periodic orbit. Then
\begin{align*}
    \frac{d\ln(n_3^*)}{dt} &= \gamma_3 R(n_3^*, t) - \sigma_3 \\
    \frac{d\ln(n_3^{**})}{dt} &= \gamma_3 R(n_3^{**}, t) - \sigma_3 
\end{align*}
\noindent Subtracting, we get

$$\frac{d\ln(n_3^*/n_3^{**})}{dt} = \gamma_3 [R(n_3^*, t) - R(n_3^{**}, t)]$$

\noindent WLOG $n_3^*(0) \geq n_3^{**}(0)$. By the uniqueness condition, $n_3^*(t)=n_3^{**}(t)$ for any $t$ iff $n_3^*(0) = n_3^{**}(0)$. As such, $n_3^*(t) \geq n_3^{**}(t)$, which implies $R(n_3^*, t) - R(n_3^{**}, t) \leq 0$, with equality only when $n_3^* = n_3^{**}$. 

We first establish that $n_3^*$ is a unique periodic orbit. If $n_3^{**}$ is also a periodic orbit, then

$$0 = \int_0^T \frac{d\ln(n_3^*/n_3^{**})}{dt} dt = \int_0^T \gamma_3 [R(n_3^*, t) - R(n_3^{**}, t)] dt$$

\noindent Since $\gamma_3(t) > 0$, this implies $n_3^* = n_3^{**}$. 

To address the second claim, if $n_3^{**}$ is not a periodic orbit, then note that

$$0 > \int_0^T \gamma [R(n_3^*, t) - R(n_3^{**}, t)] dt = \int_0^T \frac{d\ln(n_3^*/n_3^{**})}{dt} dt = -\ln(n_3^{**}(T)) + \ln(n_3^{**}(0))$$

\noindent As such, $n_3^{**}$ is increasing every cycle and approaches $n^*$.
\end{proof}

Having established the existence and uniqueness of an equilibrium when only one species is present, we are now ready to prove Lemma \ref{lem:R3}.

\begin{lemma}
    For sufficiently small $\epsilon$, when $(n_1, n_2)$ is passing through $D$, then after finite time $s$, 

    $$\frac{n_1(\tau + T)}{n_1(\tau)} = \int_{\tau}^{T+\tau} \gamma_1 R(n_1, n_2, n_3, t) - \sigma_1 dt \text{ and  }\int_0^T \gamma_1 R(0, 0, n^*, t) - \sigma_1 dt$$

    \noindent have the same sign, where $n^*$ is the nontrivial equilibrium solution for $n_3$ in the absence of the other two species.
    \label{lem:R3}
\end{lemma}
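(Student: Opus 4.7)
The plan is to exploit that throughout the passage through $D$ both $n_1$ and $n_2$ are of order at most $\epsilon^{1/c_1}$ and $\epsilon^{1/c_2}$ respectively, so that by continuity of $R$ (A4) the coordinate $n_3$ evolves over one period nearly as it would in the absence of species $1$ and $2$. Since Lemma \ref{lem:1converge} pins down the long-time behavior of the isolated $n_3$ dynamics as convergence to $n^*$, I should be able to replace $R(n_1,n_2,n_3,t)$ by $R(0,0,n^*(t),t)$ inside the one-period integrand up to an error that is uniformly small for small $\epsilon$ and for $\tau$ sufficiently deep inside the passage. This closeness of integrands, once tighter than the absolute value of the reference integral $\int_0^T \gamma_1 R(0,0,n^*,t) - \sigma_1 \, dt$, forces the two integrals to share a sign.

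The execution would proceed in three steps. First, invoke Lemma \ref{lem:R2} to get a uniform lower bound $a>0$ for $n_3$ at the start of any passage through $D$ beginning past a fixed large time; combined with the upper bound from A2, $n_3(t)$ stays in a compact set $K = [a,M]$ for as long as the trajectory remains in $D$, and on $K$ the map $R$ is Lipschitz uniformly in $t$. Second, compare $n_3$ to the single-species solution $\tilde n_3$ matching $n_3$ at the start of the passage $\tau_0$: rewriting the equation as $\dot n_3 = n_3[\gamma_3 R(0,0,n_3,t) - \sigma_3] + n_3\gamma_3[R(n_1,n_2,n_3,t) - R(0,0,n_3,t)]$ and applying Gronwall to $|n_3 - \tilde n_3|$ yields a bound of the form $\omega(\epsilon)\,e^{L(t-\tau_0)}$, where $\omega(\epsilon)\to 0$ as $\epsilon \to 0$. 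Third, invoke Lemma \ref{lem:1converge}: on any fixed window of length $T$ starting at $\tau \geq \tau_0 + s$ with $s$ large enough, $\tilde n_3$ is arbitrarily close to $n^*$ in sup norm.

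Putting these together, $n_3(t)$ is within any prescribed tolerance of $n^*(t)$ uniformly on $[\tau,\tau+T]$, so by continuity of $R$ and A3 the integrand $\gamma_1(t) R(n_1,n_2,n_3,t) - \sigma_1(t)$ is uniformly close to $\gamma_1(t) R(0,0,n^*(t),t) - \sigma_1(t)$; periodicity of the reference system then identifies $\int_\tau^{\tau+T}$ with $\int_0^T$ of the reference integrand, and the sign comparison follows. The main obstacle is an order-of-limits subtlety created by Lemma \ref{lem:R1}: the passage duration grows as $\epsilon \to 0$, while the Gronwall constant degrades with time, so one must first shrink $\epsilon$ to guarantee that an interior time $\tau$ with at least $s$ elapsed (enough for $\tilde n_3$ to relax to $n^*$ on a horizon of length $T$) exists, and then possibly shrink $\epsilon$ again so that $\omega(\epsilon) e^{LT}$ is negligible on this fixed window. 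Both requirements are easily satisfied simultaneously because the second involves only the fixed horizon $T$, independent of how long the passage itself lasts.
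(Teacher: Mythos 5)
Your first and third steps match the paper's strategy (use Lemma \ref{lem:R2} to confine $n_3(\tau_0)$ to a compact set, then use the single-species convergence of Lemma \ref{lem:1converge}), but the middle step has a genuine gap. Your Gronwall comparison is anchored at the start of the passage $\tau_0$, giving an error $\omega(\epsilon)e^{L(t-\tau_0)}$ that grows exponentially in elapsed time, while the lemma must produce sign agreement at \emph{every} window $[\tau,\tau+T]$ with $\tau\ge\tau_0+s$, over passages whose duration is unbounded (by Lemma \ref{lem:R1} it tends to infinity as $C\to 0$, and that is exactly the regime in which the lemma is applied). Your proposed resolution of the order-of-limits issue --- that the exponential only involves the fixed horizon $T$ --- conflates two different anchorings: if you keep the anchor at $\tau_0$, the relevant factor is $e^{L(\tau-\tau_0+T)}$, which for fixed $\epsilon$ is useless deep inside a long passage; if instead you re-anchor the comparison solution at $\tau$ so as to pay only $e^{LT}$, then you need to know that $n_3(\tau)$ is already close to $n^*(\tau)$, which is precisely what is to be proved. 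A secondary problem is that Gronwall requires a Lipschitz bound for $R$ in $n_3$, whereas the standing assumptions give only continuity (A4) together with monotonicity (A1) and uniqueness (A5).

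The paper closes exactly this hole by replacing the additive perturbation estimate with a monotone sandwich that never degrades in time: during the passage $0\le n_1\le\epsilon^{1/c_1}$ and $0\le n_2\le\epsilon^{1/c_2}$, so by A1 one has $R(0,0,n_3,t)-m\le R(n_1,n_2,n_3,t)\le R(0,0,n_3,t)$ with $m=m(\epsilon)\to 0$; the ODE comparison theorem then traps $n_3$ for the \emph{entire} passage between the solution of the unperturbed single-species equation and the solution of the $m$-perturbed one. Lemma \ref{lem:1converge} makes both bounding solutions converge to their periodic orbits, and the paper's integral identity shows these orbits coalesce as $\epsilon\to 0$, so $R(n_1,n_2,n_3,t)$ is eventually uniformly close to $R(0,0,n^*,t)$ no matter how long the passage lasts. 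The uniform finite time $s$ then comes from monotone dependence on the initial value $n_3(\tau_0)\in[\underline{m},M]$ (lower bound from Lemma \ref{lem:R2}, upper bound from A2): the extreme initial conditions take the longest to settle, and $s$ is taken to be that worst case. If you recast your second step as this two-sided comparison instead of a Gronwall estimate, the rest of your outline goes through.
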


To prove, we first note that by monotonicity,

$$R(\epsilon^{1/c_1}, \epsilon^{1/c_2}, n_3, t) < R(n_1, n_2, n_3, t) \leq R(0, 0, n_3, t)$$

\noindent Let $R(0, 0, n_3, t) - R(\epsilon^{1/c_1}, \epsilon^{1/c_2}, n_3, t) < m$. Then by the ODE comparison theorem, we know that $n_3$ is bounded between the solutions for

$$\frac{1}{n_3}\frac{dn_3}{dt} = \gamma_3 (R(0, 0, n_3, t) - m) - \sigma_3, \quad \frac{1}{n_3}\frac{dn_3}{dt} = \gamma_3 R(0, 0, n_3, t) - \sigma_3$$

Let $n^*$ be the equilibrium solution for the upper bound and $n_m$ the solution for the lower bound. By continuity of $R$ we can find an $\epsilon$ that lets $m$ be arbitrarily small. Applying Lemma \ref{lem:1converge}, $n_m$ must approach its equilibrium. Then,

$$0 = \lim_{\tau\rightarrow \infty}\int_{\tau}^{\tau + T} \gamma_3 R(0, 0, n_m, t) - (\sigma_3 + \gamma_3m) dt =  \int_0^T \gamma_3R(0, 0, n^{*}, t) - \sigma_3$$

\noindent Rearranging the above,

$$\lim_{\tau\rightarrow \infty}\int_{\tau}^{\tau + T} \gamma_3 (R(0, 0, n_m, t) - R(0, 0, n^{*}, t)) dt = m\int_0^T \gamma_3 dt$$

\noindent which approaches $0$ as $\epsilon$ approaches $0$. Noting that $\gamma_3 > 0$ and $R(0, 0, n^*, t) < R(0, 0, n_m, t)$ implies 

$$\lim_{\epsilon \rightarrow 0} \lim_{\tau\rightarrow \infty}\int_{\tau}^{\tau + T} R(0, 0, n_m, t) - R(0, 0, n^{*}, t) dt = 0$$

\noindent and subsequently

$$\lim_{\epsilon \rightarrow 0} \lim_{\tau\rightarrow \infty}\int_{\tau}^{\tau + T} \gamma_i R(0, 0, n_m, t) = \int_0^T \gamma_i R(0, 0, n^*, t)$$

To show that the integrals have the same sign happens after time $s$ regardless of $n_3(0)$ at time of entering $D$, recall from Lemma \ref{lem:R2} that $n_3(0)$ has a nonzero lower bound $\underline{m}$ and the upper bound $M$. By monotonicity, $n_m$ will take longest to reach the same sign when $n_m(0)=M$ or $\underline{m}$. Take $s$ to be the longer time. As $n_m \leq n_3 \leq n^*$, we have our desired result. 

We are now ready to prove extinction in case 3. By Lemma \ref{lem:R3}, $n_1(t+T)/n_1(t)$ will always be positive or negative after time $s$ in $D$, which we know will happen from Lemma \ref{lem:R1}. If the sign is negative, then $n_1^{c_1}$ would shrink before reaching $\epsilon$, and therefore the solution cannot move from $(0, n_2^*)$ to $(n_1^*, 0)$. If the sign is positive, then $n_1^{c_1}$ would grow before reaching $C/\epsilon$, which implies that $n_2^{c_2} < \epsilon$ and the solution cannot move from $(n_1^*, 0)$ to $(0, n_2^*)$. As such, we have a contradiction, and the $\limsup$ of $n_1$ or $n_2$ is 0. This concludes case 3.

\section*{Acknowledgements}
The authors would like to thank the editor and both reviewers for their valuable comments. Both authors were partially supported by the National Science Foundation grant 1809967.

\nocite{*}
\printbibliography

\end{document}